\documentclass[journal]{IEEEtran}
\usepackage{amsmath}
\usepackage{multicol}
\usepackage{mathtools}
\usepackage{amssymb}
\usepackage{amsmath}  
\usepackage{subeqnarray}
\usepackage{bm}
\usepackage{color}

\usepackage{algorithm}
\usepackage{algorithmic}        

\usepackage{cite}

\usepackage{array}  
\usepackage{url}

\usepackage{amsthm}
\usepackage{graphicx}
\usepackage{subfigure}
\theoremstyle{plain}     
\newtheorem{thm}{Theorem}

\newtheorem{cor}{Corollary}
\theoremstyle{definition}

\theoremstyle{remark}




\begin{document}
%
\title{Effective Capacity Analysis in Ultra-Dense Wireless Networks with Random Interference}

\author{\IEEEauthorblockN{Yu Gu~\IEEEmembership{Student Member,~IEEE,} Qimei Cui,~\IEEEmembership{Senior Member,~IEEE,}
         Yu Chen,~\IEEEmembership{Member,~IEEE,}\\
         Wei Ni,~\IEEEmembership{Senior Member,~IEEE,}
         Xiaofeng Tao,~\IEEEmembership{Senior Member,~IEEE,}
         Ping Zhang,~\IEEEmembership{Senior Member,~IEEE}}
\thanks{Y. Gu, Q. Cui, Y. Chen, X. Tao, and P. Zhang are with the National Engineering Laboratory for Mobile Network Technologies, Beijing University of Posts and Telecommunications, Beijing 100876, China (Corresponding author:
Qimei Cui; cuiqimei@bupt.edu.cn).}
\thanks{W. Ni is with the Digital Productivity and Services Flagship, Commonwealth Scientific and Industrial Research Organization (CSIRO), Sydney, N.S.W. 2122, Australia (e-mail: wei.ni@csiro.au).}}

\maketitle

\begin{abstract}
Ultra-dense networks (UDNs) provide a promising paradigm to cope with exponentially increasing mobile traffic. However, little work has to date considered unsaturated traffic with quality-of-service (QoS) requirements. This paper presents a new cross-layer analytical model to capture the unsaturated traffic of a UDN in the presence of QoS requirements. The effective capacity (EC) of the UDN is derived, taking into account small-scale channel fading and possible interference. Key properties of the EC are revealed. The amount of traffic impacts effective capacity of the UDN due to the sophisticated interactions among small base stations operating in the same frequency. The maximization of total effective capacity is formulated as a non-cooperative game in the paper. The best-response function is derived , iteratively searching the Nash equilibrium point. System simulation results indicate that our proposed model is accurate. The simulations also show the maximum allowed arrival rate with the QoS guarantee, compared with the full interference model.
\end{abstract}

\begin{IEEEkeywords}
Effective capacity, QoS, unsaturated traffic, ultra-dense network (UDN).
\end{IEEEkeywords}

%
\IEEEpeerreviewmaketitle

\section{Introduction}
\IEEEPARstart{U}{ltra}-dense network (UDN) is a promising technology to address the ever and exponentially-increasing demand for mobile traffic \cite{UDNsurvey,cui2017preserving}. In UDNs, small base stations (SBSs) are densely deployed to allow user equipments (UEs) to be in proximity of SBSs. This can help improve signal qualities of the UEs. Moreover, the network densification can improve network capacity by allowing frequency to be spatially reused. On the other hand, the ultra-dense deployment would cause a severe issue of interference \cite{Interference}. The interference distribution in UDNs is more unpredictable and unmanageable. Accurate characterization of the interference in UDNs is a fundamental step towards understanding the overall performance of dense networks.

Studies have been carried out to tackle the interference of ultra-dense networks, typically under the assumption of a spatial Poisson point process (PPP) in the placement of SBSs with saturated traffic. Analyses have been conducted on coverage probability, throughput and other performance metrics in the downlink of UDNs \cite{PPP1,PPP2}. However, these studies focus on the scenario where SBSs always have packets to transmit (i.e., saturated traffic). However, in practice, unsaturated traffic can have a strong impact on the distribution of interference in UDNs. This is because SBSs without traffic data can remain idle to prevent interference to other SBSs. Yu \cite{INACTIVE1} and Kibria \cite{INACTIVE2} considered an inactive BS probability using the stochastic geometry, which defined the probability that a randomly chosen BS did not have any UE in its Voronoi cell. Arvanitakis \cite{GLoFlow} developed an analytical model to capture load-based interference, assuming that the load of a SBS was $\rho$ ($\rho <1$) and the SBS caused interference only $\rho$ of the time. However, the interference is time-varying according to the time-varying traffic. Little works can capture the random interference due to the unsaturated traffic. The performance studies based on PPP \cite{PPP1,PPP2,INACTIVE1,INACTIVE2,GLoFlow} could be restrictive. The consideration of traffic randomness is important to accurately analyze the performance of UDNs.

Another challenge in UDNs is how to guarantee the quality-of-service (QoS) of UEs, or more specifically, delay. Particularly, the time-varying wireless channel quality and random traffic arrival can significantly affect the signal-to-interference-plus-noise-ratio (SINR) at the UEs. To analyze the time-varying, random interference with unsaturated traffic and the QoS, there are a small number of studies on networks performance with aid of the queueing model \cite{Queue2,Queue1}. \cite{Queue2} developed a 3-dimensional (3-D) Markov chain models to capture the queue length and important QoS measures, such as delay, packet loss, and throughput, for practical 802.11 systems with finite buffer under finite load. \cite{Queue1} proposed an N-dimensional Markov chain to model the queue states. The queue states decided the interfering SBSs sets. The stationary probability of the queue length of each UE could be obtained, which enabled the calculations of the average delay and the average packet overflow probability. Most of the above works characterize the average queueing performance (i.e., average delay) in a deterministic way, and suppress the variations of wireless mobile fading channels. Meanwhile, satisfying a deterministic delay constraint is either different to achieve or at a cost of substantially reducing transmission rates \cite{7061966}.

In contrast to the above deterministic delay QoS bounds, effective capacity (EC) is a powerful tool for characterizing physical-layer channels under statistical QoS guarantee \cite{EC}. The analysis and application of EC in various scenarios have attracted lots of interest, such as cellular network \cite{ChenEC} and random access network \cite{CuiEC}. However, most existing works have focused on time-varying fading channels, and overlooked the time-varying interference caused by unsaturated traffic at the SBSs. In general, non-trivial effort would be required to have the EC extended to UDN scenarios.

In this paper, we investigate the prominent region between capacity and QoS in UDNs, under unsaturated traffic and statistically characterized QoS requirements. Unsaturated traffic can have a strong impact on the distribution of interference in UDNs. Hence, we develop a novel cross-layer analytical model for UDNs to capture the small-scale channel fading and possible interference. Comparing to the recent studies, breakthroughs are brought forth in developing the theory of the effective capacity in UDNs. The main contributions of this paper can be summarized as follows:
\begin{itemize}
  \item First, we develop a cross-layer analytical model to calculate the idle probability of SBSs and the EC of UDNs. For the two--SBSs case, we derive the probability density function (PDF) of SINR and, in turn, the EC of the SBSs in the presence of unsaturated traffic and non-trivial QoS requirements.
  \item Second, we derive the EC for N--SBSs case, considering the $2^{(N-1)}$ sets account for all possible interfering SBSs and the multiple integrals accounting for small-scale fading. And some key properties of the EC are provided.
  \item Thirdly, to characterize the impact of the level of traffic saturation on the total effective capacity of UDN, a non-cooperative game is formulated to maximize QoS- guaranteed total average arrival traffic for UDN. The best-response function is derived and can be used to iteratively search the Nash equilibrium (NE) point.
  \item Last but not least, our simulations show that our proposed model is accurate, compared with existing models. Numerical simulations show the maximum traffic arrival rate with the QoS guarantee in UDNs, and an appreciable improvement of traffic arrival rate is observed as compared to the saturated traffic condition. Simulations show using the best-response function can coverage to the NE point at finite iterations, which is superior to other strategies of traffic unsaturation.
\end{itemize}

The remainder of this paper is organized as follows. Section II describes the system model. In Section III, we derive the new EC under unsaturated traffic, and infer key properties. In Section IV, we develop a simulation platform and validate our new analytical model, followed by conclusions in Section V.
\section{System Model}
\begin{figure}[!t]
\centering
\includegraphics[width=3.5in]{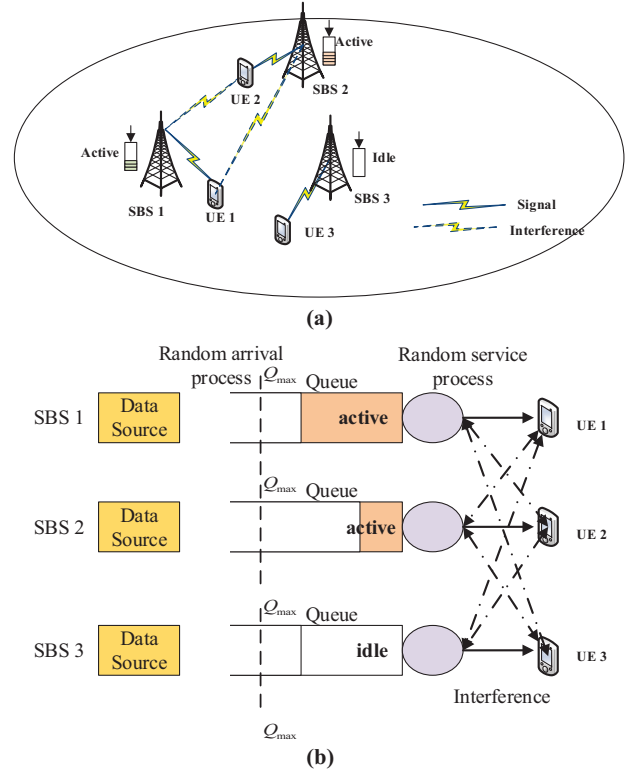}
\vspace{-0.4cm}
\caption{(a) An example of system model including four SBSs and four corresponding UEs, where SBS1 and SBS2 are active to transmit the data, and SBS3 is idle because of the idle queue; (b) An example of queueing model.}
\label{fig1}
\vspace{-0.2cm}
\end{figure}

We consider the downlink of a UDN, composed of SBSs and their associated UEs, as shown in Fig. \ref{fig1}.  Let $\mathcal{N} = {\{1,...,N\}}$ denote the set of SBSs. With an focus on the EC of the UDN, we assume that the association between SBSs and UEs are in place. We also assume that all SBSs operate in the same channel, serving one UE per SBS.

A First-Input-First-Output (FIFO) queue is used to buffer data traffic at each SBS for its associated UE. The traffic data requested by the UE randomly arrives at the queue. The queue is assumed to be sufficiently long, so that neither does it overflow nor incurs packet losses. The queues of some SBSs can be empty and the SBSs become idle over periods with no traffic arrival for their UEs. When there are no data to transmit, a SBS can be turned off to prevent interference. Let $P_{n}$ denote the idle probability of the $n$-th SBS, and $\mathcal{M}_n$ denote the set of interfering SBSs for the $n$-th SBS. $(\mathcal{N}-\mathcal{M}_n)$ is the set of idle SBSs for the $n$-th SBS.

The wireless channel from a UE to its associated SBS is modeled by Rayleigh block fading. In other words, the channels remain unchanged during a time frame $T_s$, and changes between frames. The SINR from the $n$-th SBS to its associated UE can be given by
\begin{equation}\label{SINR}
\begin{aligned}
{\gamma _n} &= \frac{{{S_{n,n}}}}{{{I_{\mathcal{M}_n}} + {\sigma ^2}}} = \frac{{{S_{n,n}}}}{{\sum\limits_{j \in \mathcal{M}_n} {{S_{j,n}}}  + {\sigma ^2}}}\\
&= \frac{{{\alpha _{n,n}}{{\left| H_{n,n} \right|}^2}}}{{\sum\limits_{j \in \mathcal{M}_n} {{\alpha _{j,n}}{{\left| H_{j,n} \right|}^2}}  + {\sigma ^2}}},
\end{aligned}
\end{equation}
where $S_{j,n}$ is the received signal strength from the $j$-th SBS to the UE associated with the $n$-th SBS, $I_{\mathcal{M}_n}$ is the interference to the UE, ${{\sigma ^2}}$ is the power of the additive white Gaussian noise, $\alpha_{j,n}$ captures the transmission power, large-scale path loss, and shadowing effect from the $j$-th SBS to the UE associated with the $n$-th SBS, ${{H_{j,n}}}$ is the Rayleigh fading coefficient with unit variance, and ${{{\left| {{H_{j,n}}} \right|}^2}}$ has an independent and identically distributed (i.i.d) exponential distribution with unit mean, i.e., ${\left| {{H_{j,n}}} \right|^2} \sim\exp (1)$. Let $\mathbf{H} = \{ {H_{j,n}},j \in \mathcal{N},n \in \mathcal{N}\} $.
For notational simplicity, we define
\begin{equation}
{\beta_{j,n}} = \frac{{{S_{j,n}}}}{{{\sigma ^2}}} = \frac{{{\alpha_{j,n}}{{\left| {{H_{j,n}}} \right|}^2}}}{{{\sigma ^2}}} = {{\bar \beta }_{j,n}}{\left| {{H_{j,n}}} \right|^2},
\end{equation}
where ${{\bar \beta }_{j,n}} = \frac{{{\alpha _{j,n}}}}{{{\sigma ^2}}}$. As a result, (\ref{SINR}) can be rewritten as
\begin{equation}
{\gamma _n} = \frac{{{\beta_{n,n}}}}{{\sum\limits_{j \in \mathcal{M}_n} {{\beta _{j,n}}}  + 1}}.
\end{equation}

For mathematic tractability, we assume that the amout of traffic (in bits) arriving at slot $t=1,2,3,\ldots$, denoted by $a_n[t]$, follows an i.i.d Bernoulli process with the arrival probability $p_n$, and $a_n[t]$ follows an exponential distribution with mean $\bar L_n$. The average arrival rate is $\mu_n  = \frac{{{{\bar L}_n}{p_n}}}{{{T_s}}}$.

\section{Effective Capacity of unsaturated UDN}
\subsection{Analysis of Effective Capacity}
In the context of EC, QoS is characterized statistically by employing the QoS exponent $\bm \theta  = \left\{ {{\theta _n},n = 1, \cdots ,N} \right\}$, as given by \cite{EC}
\begin{equation}\label{1a1}
\theta_n= - \mathop {\lim }\limits_{{Q_n^{\rm th}} \to \infty } \frac{{\log (\Pr \{ Q_n[\infty] > {Q_n^{\rm th}}\} )}}{{{Q_n^{\rm th}}}},
\end{equation}
where $Q_n[t]$ is the length of the FIFO queue at the $n$-th SBS at slot $t$, $Q_n^{\rm th}$ is the threshold of the queue length to guarantee the QoS of the traffic, and $\Pr \{ Q_n[\infty ] > {Q_n^{\rm th}}\}$ is the QoS-violation probability that the queue length exceed $Q_n^{\rm th}$. In this sense, $\theta_n$ provides the exponential decaying rate of the probability that the threshold is violated.

For the arrival process $a_n[t]$ at the $n$-th SBS, the effective bandwidth, denoted by $A_n(\theta_n)$, specifies the minimum constant service rate with guaranteed QoS requirement. Given the Bernoulli traffic arrival, the effective bandwidth of the traffic can be calculated as \cite{ChenEC}
\begin{equation}\label{EB}
\begin{aligned}
{A_n}({\theta _n}) &= \mathop {\lim }\limits_{t \to \infty } \frac{1}{{{\theta _n}t}}\log (\mathbb{E}\{ {e^{{\theta _n}\sum\limits_{i = 0}^t {{a_n}[i]} }}\} )\\
&= \frac{1}{{{\theta _n}{T_s}}}\log (\frac{{{p_n}}}{{1 - {\theta _n}{{\bar L}_n}}} + 1 - {p_n}).
\end{aligned}
\end{equation}
Given a service process $r_n[i]$ of the $n$-th SBS, the EC, denoted by $C_n(\theta_n)$, specifies the maximum, consistent, steady-state arrival rate at the input of the FIFO queue, as given by \cite{EC}
\begin{equation}\label{effective capacity}
{C_n}({\theta _n}) =  - \mathop {\lim }\limits_{t \to \infty } \frac{1}{{{\theta _n}t}}\log (\mathbb{E}\{ {e^{ - {\theta _n}\sum\limits_{i = 0}^t {r_n[i]} }}\} ).
\end{equation}
where $r_n[i]$ is the instantaneous maximum transmit rate at slot $i$ and $\mathbb{E}\{\cdot\}$ takes expectation. Given the i.i.d Rayleigh block fading, the EC can be rewritten as
\begin{equation}
{C_n}({\theta _n}) =  - \frac{1}{{{\theta _n T_s}}}\log (\mathbb{E}\{ {e^{ - {\theta _n}r_n T_s}}\} ),
\end{equation}
where ${r_n} = B\log (1 + {\gamma _n})$ is the instantaneous transmission rate of the physical layer and $B$ is the bandwidth of the channel. Note that ${r_n}$ is a random variable, depending on small-scale fading and the idle/active probability of other SBSs.

We note that if the assumptions of the Gartner-Ellis theorem hold \footnote{The Gartner-Ellis theorem assumptions require the function (i) exists for all real $\theta$, (ii) is strictly convex, and (iii) is essentially smooth.} \cite{ChenEC,7061966}  and there is a unique QoS exponent $\theta^*_n$ that satisfies
\begin{equation} \label{Q1}
{A_n}({\theta^*_n}) = {C_n}({\theta^*_n}),
\end{equation}
then the queue of the $n$-th SBS is stable and $P({Q_n}[\infty ] \ge Q_n^{{\rm{th}}})$ can be approximated to \cite{ChenEC}
\begin{equation} \label{Q}
P({Q_n}[\infty ] \ge Q_n^{{\rm{th}}}) \approx {\eta_n}\exp ( - {\theta^*_n}Q_n^{{\rm{th}}}),
\end{equation}
where $\eta_n$ is the probability of nonempty buffer at the $n$-th SBS. Given a delay bound $D_{\max}^n$, the probability that the steady-state traffic delay at the UE of the $n$-th SBS exceeds $D_{\max}^n$ can be given by \cite{7061966}
\begin{equation}
{P^n_{th}} \approx  {\eta _n}{e^{ - {\theta^*_n}{A_n}({\theta^*_n }){D^n_{\max }}}}.
\end{equation}
Given the Bernoulli traffic arrival per SBS, $\eta_n$ can be approximated to ${\eta _n} \approx 1 - \theta _n^*\bar L_n$ \cite{ChenEC}. Let ${d_n} = {\theta_n^*}{{\bar L}_n}$ for notation simplicity. The delay-violation probability can be rewritten as
\begin{equation} \label{Pth}
P_{th}^n = (1 - {d_n}){e^{ - \frac{1}{{{T_s}}}\log (\frac{{{p_n}}}{{1 - {d_n}}} + 1 - {p_n})D_{\max }^n}}.
\end{equation}

The idle probability of the $n$-th SBS is equivalent to the probability that no traffic arrives at the SBS meanwhile the buffer is empty. The probabilities of traffic arrival and empty buffer can be assumed to be independent under unsaturated or light traffic conditions. Then, $P_n$ can be expressed as
\begin{equation} \label{idle}
{P_n} \thickapprox (1 - {\eta _n})(1 - {p_n}) = \theta _n^*{{\bar L}_n}(1 - {p_n}).
\end{equation}

Note that ${P_n},\forall n \in \mathcal{N}$ is a steady-state probability. According to (\ref{Q1}) and (\ref{idle}), $P_n$ depends on the traffic arrival process at the $n$-th SBS, the idle probability of the other SBSs, ${P_j},\forall j \in N,j \ne n$, and the channel condition of all SBSs.

To analyze the EC of the UDN, we first investigate a simple scenario which involves only two SBSs and two corresponding UEs. Then this scenario is extended to a general scenario involving more SBSs and UEs, as to be discussed later.
\subsection{EC for Two-SBS Case}
Consider two UEs and two corresponding SBSs, namely SBS 1 and SBS 2. For SBS 1, there are two possible scenarios that the SINR of the associated UE undergoes.

\emph{Type I}: SBS 2 is idle. Then, the corresponding SINR is given by ${\gamma_1} = {\beta_{1,1}}$. Due to ${\beta _{1,1}} \sim \exp ({\bar \beta _{1,1}})$, the PDF of the SINR can be written as
      \begin{equation} \label{TypeI}
      f_1(x) = \frac{1}{{{{\bar \beta }_{1,1}}}}{e^{ - \frac{x}{{{{\bar \beta }_{1,1}}}}}}.
      \end{equation}

\emph{Type II}: SBS 2 is active. Then, the corresponding SINR is given by ${\gamma _1} = \frac{{{\beta _{1,1}}}}{{{\beta _{2,1}} + 1}}$. The PDF of the SINR can be written as:
      \begin{equation}\label{TypeII}
      f_2(x) = \left( {\frac{1}{{{{\bar \beta }_{1,1}} + {{\bar \beta }_{2,1}}x}} + \frac{{{{\bar \beta }_{1,1}}{{\bar \beta }_{2,1}}}}{{{{\left( {{{\bar \beta }_{1,1}} + {{\bar \beta }_{2,1}}x} \right)}^{\rm{2}}}}}} \right){e^{ - \frac{x}{{{{\bar \beta }_{1,1}}}}}}.
      \end{equation}
      \begin{proof}
      See Appendix A.
      \end{proof}
Thus, the EC of SBS 1 can be given by
\begin{equation} \label{EC1}
\begin{aligned}
{C_1}({\theta _1},P_2) =&  - \frac{1}{{{\theta _1 T_s}}}\log (\mathbb{E}\{ {e^{ - {\theta _1}{r_1 T_s}}}\} )\\
 =& - \frac{1}{{{\theta _1}{T_s}}}\log ({P_2}\int_0^{ + \infty } {{e^{ - {\theta _1}B{T_s}\log (1 + x)}}{f_1}(x)dx} \\
 &+ \left( {1 - {P_2}} \right)\int_0^{ + \infty } {{e^{ - {\theta _1}B{T_s}\log (1 + x)}}{f_2}(x)dx} ).
\end{aligned}
\end{equation}
Similarly, the EC of SBS 2 can be given by
\begin{equation} \label{EC2}
\begin{aligned}
{C_2}({\theta _2},P_1) =&  - \frac{1}{{{\theta _2 T_s}}}\log (\mathbb{E}\{ {e^{ - {\theta _2}{r_2 T_s}}}\} )\\
 =& - \frac{1}{{{\theta _2}{T_s}}}\log ({P_1}\int_0^{ + \infty } {{e^{ - {\theta _2}B{T_s}\log (1 + x)}}{f_3}(x)dx}  \\
 & + \left( {1 - {P_1}} \right)\int_0^{ + \infty } {{e^{ - {\theta _2}B{T_s}\log (1 + x)}}{f_4}(x)dx} ),
\end{aligned}
\end{equation}
where $f_3(x)$ and $f_4(x)$ are the PDFs of the SINR at SBS 2 in regards of the two types, and can be obtained by switching the roles of SBSs 1 and 2 in (\ref{TypeI}) and (\ref{TypeII}), respectively.

Substituting (\ref{idle}) into (\ref{EC1}) and (\ref{EC2}), we can obtain the EC of each of the SBSs, as given by,
\begin{equation} \label{1}
{{C_1}({\theta _1},\theta _2^*) = {C_1}\left( {{\theta _1},\theta _2^*{{\bar L}_2}(1 - {p_2})} \right)},
\end{equation}
\begin{equation} \label{2}
{C_2}(\theta _1^*,{\theta _2}) = {C_2}\left( {{\theta _2},\theta _1^*{{\bar L}_1}(1 - {p_1})} \right).
\end{equation}
\subsection{EC for N--SBS Case}
We proceed to consider a UDN with $N$ SBSs. For any SBS, a total of $2^{(N-1)}$ sets are needed to account for all possible cases in terms of the ``on'' and ``off'' states of all other SBSs. The EC of the $n$-th SBS can be given by
\begin{equation} \label{q}
\begin{aligned}
{C_n}&({\theta _n},{P_1}, \ldots {P_{n - 1}},{P_{n + 1}}, \ldots ,{P_N})\\
&=  - \frac{1}{{{\theta _n}{T_s}}}\log ({\mathbb{E}_{{\cal M}_n, \mathbf{H}}}\{ {e^{ - {\theta _n}{r_n}{T_s}}}\} ),
\end{aligned}
\end{equation}
where ${\mathbb{E}_{{\cal M}_n, \mathbf{H}}}\{  \cdot \} $ takes expectation over all other active sets and channel fading conditions. Despite the discussion of the two--SBS case can be readily applied to this case, unfortunately the PDF of SINR is intractable, due to multiple integrals accounting for $(N-1)$ interfering SBSs per SBS.

We propose to decouple ${\mathbb{E}_{{\cal M}_n, \mathbf{H}}}\{  \cdot \} $, as given by
\begin{equation} \label{ECtem}
\begin{array}{l}
{\mathbb{E}_{{\cal M}_n,{\bf{H}}}}\{ {e^{ - {\theta _n}B{T_s}\log (1 + {\gamma _n})}}\} \\
\mathop {\rm{ = }} {\mathbb{E}_{\mathcal{M}_n}}\left\{ {{\mathbb{E}_{\bf{H}}}\{ {e^{ - {\theta _n}B{T_s}\log (1 + {\gamma _n})}}|\mathcal{M}_n\} } \right\}\\
\mathop {\rm{ = }}\limits^{(a)} {\mathbb{E}_{\mathcal{M}_n}}\left\{ {{\mathbb{E}_{\bf{H}}}\{ {e^{ - {\theta _n}B{T_s}\log (1 + {\gamma _n})}}\} } \right\}\\
\mathop  = \limits^{(b)} {E_{{\cal M}_n}}\left\{ {\int_0^{ + \infty } {\Pr ({e^{ - {\theta _n}B{T_s}\log (1 + {\gamma _n})}} \ge t)} dt} \right\}
\end{array}
\end{equation}
where (a) is due to the independence of other active sets and channel fading conditions, (b) is due to the fact that $\mathbb{E}\left\{ X \right\} = \int_0^{ + \infty } {P(X \ge t)} dt$, and the probability ${\Pr ({e^{ - {\theta _n}B{T_s}\log (1 + {\gamma _n})}} \ge t)}$ is given by
\begin{equation} \label{nanting}
\begin{array}{l}
\Pr ({e^{ - {\theta _n}B{T_s}\log (1 + {\gamma _n})}} \ge t)\\
 = \Pr ({\gamma _n} \le {2^{ - \frac{1}{{{\theta _n}B{T_s}}}\ln t}} - 1)\\
 = \Pr (\frac{{{\beta _{n,n}}}}{{\sum\limits_{j \in {\cal M}_n} {{\beta _{j,n}}}  + 1}} \le {2^{ - \frac{1}{{{\theta _n}B{T_s}}}\ln t}} - 1)\\
 = 1 - \Pr \left( {{{\left| {{H_{n,n}}} \right|}^2} \ge \frac{{({2^{ - \frac{1}{{{\theta _n}B{T_s}}}\ln t}} - 1)(\sum\limits_{j \in {\cal M}_n} {{\beta _{j,n}}}  + 1)}}{{{{\bar \beta }_{n,n}}}}}\right)\\
\mathop  = \limits^{(a)} 1 - {\mathbb{E}_{{\bf{H}}}}\left\{ {\exp \left( { - s(\sum\limits_{j \in \mathcal{M},j \ne n} {{\beta _{j,n}}}  + 1)} \right)} \right\}\\
\mathop {\rm{ = }}\limits^{(b)} 1 - {{\rm{e}}^{ - s}}{\prod\limits_{j \in \mathcal{M}_n} {{\mathbb{E}_{\bf{H}}}\left\{ {\exp \left( { - s{\beta _{j,n}}} \right)} \right\}} } \\
{\rm{ = }}1 - {{\rm{e}}^{ - s}}\prod\limits_{j \in {{\cal M}_n}} {\int_{\rm{0}}^{{\rm{ + }}\infty } {{e^{ - s{{\bar \beta }_{j,n}}x}}} {e^{ - x}}dx} \\
\mathop {= } 1 - {{\rm{e}}^{ - s}}{\prod\limits_{j \in \mathcal{M}_n} {\frac{1}{{1 + s{{\bar \beta }_{j,n}}}}} }.
\end{array}
\end{equation}
In (\ref{nanting}), (a) is due to the fact that the channel model  ${\left| {{H_{n,n}}} \right|^2} \sim\exp (1)$. For notational simplicity, we denote $s = \frac{{{2^{ - \frac{1}{{{\theta _n}B{T_s}}}\ln t}} - 1}}{{{{\bar \beta }_{n,n}}}}$. (b) is due to the independence of the fading channel coefficients.

Given the fact that ${\gamma _n} > 0$, in the case of $t>1$, we have
\begin{equation} \label{chunye}
\begin{array}{l}
\Pr ({e^{ - {\theta _n}B{T_s}\log (1 + {\gamma _n})}} \ge t)\\
 = \Pr ({\gamma _n} \le {2^{ - \frac{1}{{{\theta _n}B{T_s}}}\ln t}} - 1)\\
 = 0.
\end{array}
\end{equation}
Substituting (\ref{nanting}) and (\ref{chunye}) into (\ref{ECtem}), we have
\begin{equation}
\begin{array}{l}
{\mathbb{E}_{{\mathcal{M}_n},{\bf{H}}}}\{ {e^{ - {\theta _n}B{T_s}\log (1 + {\gamma _n})}}\}  \\
= {\mathbb{E}_{{\mathcal{M}_n}}}\left\{ {\int_0^{ + \infty } {1 - {{\rm{e}}^{ - s}}\prod\limits_{j \in {{\cal M}_n}} {\frac{1}{{1 + s{{\bar \beta }_{j,n}}}}} } dt} \right\}\\
 = 1 - {\mathbb{E}_{{\mathcal{M}_n}}}\left\{ {\int_0^1 {{{\rm{e}}^{ - s}}\prod\limits_{j \in {{\cal M}_n}} {\frac{1}{{1 + s{{\bar \beta }_{j,n}}}}} } dt} \right\}
\end{array}
\end{equation}
According to the Fubini's theorem \cite{Fubinitheorem}, because ${{{\rm{e}}^{ - s}}\prod\limits_{j \in {M_n}} {\frac{1}{{1 + s{{\bar \beta }_{j,n}}}}} }$ is nonnegative and ${\mathbb{E}_{{{\cal M}_n}}}\left\{ {\int_0^1 {{{\rm{e}}^{ - s}}\prod\limits_{j \in {M_n}} {\frac{1}{{1 + s{{\bar \beta }_{j,n}}}}} } dt} \right\} < \infty $, we have
\[\begin{array}{l}
{\mathbb{E}_{{{\cal M}_n}}}\left\{ {\int_0^1 {{{\rm{e}}^{ - s}}\prod\limits_{j \in {\mathcal{M}_n}} {\frac{1}{{1 + s{{\bar \beta }_{j,n}}}}} } dt} \right\}\\
 = \int_0^1 {{\mathbb{E}_{{{\cal M}_n}}}\left\{ {{{\rm{e}}^{ - s}}\prod\limits_{j \in {\mathcal{M}_n}} {\frac{1}{{1 + s{{\bar \beta }_{j,n}}}}} } \right\}} dt.
\end{array}\]
As a result, ${\mathbb{E}_{{\mathcal{M}_n},{\bf{H}}}}\{ {e^{ - {\theta _n}B{T_s}\log (1 + {\gamma _n})}}\}$ can be rewritten as
\begin{equation} \label{1123}
\begin{array}{l}
{\mathbb{E}_{{{\cal M}_n},{\bf{H}}}}\{ {e^{ - {\theta _n}B{T_s}\log (1 + {\gamma _n})}}\} \\
 = 1 - \int_0^1 {{{\rm{e}}^{ - s}}{\mathbb{E}_{{{\cal M}_n}}}\left\{ {\prod\limits_{j \in {M_n}} {\frac{1}{{1 + s{{\bar \beta }_{j,n}}}}} } \right\}} dt
\end{array}
\end{equation}
We introduce new auxiliary random variables: for $j=1, \ldots, N$,
\[{X_j} = \left\{ {\begin{array}{*{20}{c}}
{\frac{1}{{1 + s{{\bar \beta }_{j,n}}}},{\rm{when~SBS~}}j{~\rm{~is~active}}}\\
{1{\rm{~~~~~~,when~SBS~}}j{~\rm{~is~idle}}}
\end{array}} \right.\]
and let $Y_n = \prod\limits_{j \in \mathcal{N},j \ne n} {{X_j}}$. Due to the independence of $X_j, \forall j \in \mathcal{N}$, we have
\begin{equation} \label{transform}
{\mathbb{E}_{\cal M}}\left\{ {\prod\limits_{j \in \mathcal{M}_n} {\frac{1}{{1 + s{{\bar \beta }_{j,n}}}}} } \right\} = \mathbb{E}_{Y_n}\left\{ {{Y_n}} \right\} = \prod\limits_{j \in \mathcal{N},j \ne n} {\mathbb{E}_{X}\left\{ {{X_j}} \right\}}.
\end{equation}
Note that (\ref{transform}) transforms the expectation over all other active sets into the expectation over each SBS, which reduce the computational complexity.

Substituting (\ref{transform}) into (\ref{1123}), (\ref{1123}) can be rewritten as
\begin{equation} \label{ECtem1}
\begin{array}{l}
{{\mathbb{E}_{{{\cal M}_n},{\bf{H}}}}\{ {e^{ - {\theta _n}B{T_s}\log (1 + {\gamma _n})}}\} }\\
{ = 1 - \int_0^1 {{{\rm{e}}^{ - s}}\prod\limits_{j \in \mathcal{N},j \ne n} {\left( {(1 - {P_j})\frac{1}{{1 + s{{\bar \beta }_{j,n}}}} + {P_j}} \right)} dt} }
\end{array}
\end{equation}
Substituting (\ref{ECtem1}) into (\ref{q}), we can obtain the EC of each of the SBSs, as given by,
\begin{equation} \label{NEC}
\begin{array}{*{20}{l}}
{{C_n}({\theta _n},{P_1}, \ldots {P_{n - 1}},{P_{n + 1}}, \ldots ,{P_N})}\\
{ =  - \frac{1}{{{\theta _n}{T_s}}}\log \left\{ {1 - \int_0^1 {{{\rm{e}}^{ - s}}\!\!\!\!\!\!\prod\limits_{j \in \mathcal{N},j \ne n} {\left( {(1 - {P_j})\frac{1}{{1 + s{{\bar \beta }_{j,n}}}} + {P_j}} \right)} dt} } \right\}}
\end{array}
\end{equation}
Substituting (\ref{idle}) into (\ref{NEC}), we can obtain the ${C_n}({\theta _n},\theta _1^*, \ldots \theta _{n - 1}^*,\theta _{n + 1}^*, \ldots ,\theta _N^*)$. The following theorem and corollaries could reveal the key properties of EC, which can be used to configure and optimize network operations, such as resource allocation and admission control with non-trivial QoS requirements.
\begin{thm}
As the QoS exponent $\theta_n$ increases, $C_n$ decreases and $\bm C_{-n}$ increases, where $\bm C_{-n} = \{C_{j},j = 1, \cdots ,N, j\ne n \}$ collects the EC of the rest of the SBSs except the $n$-th SBS.
\end{thm}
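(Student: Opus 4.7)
The plan is to split the theorem along its natural structural lines: the first claim is a single-SBS monotonicity property, while the second claim exploits the fact that in (\ref{NEC}) the exponent $\theta_n$ enters the EC of every other SBS $j \ne n$ only through the idle probability $P_n$ via (\ref{idle}). Treating the two halves separately avoids wrestling with the $\theta_n$-dependence hidden inside $s=(2^{-\ln t/(\theta_n B T_s)}-1)/\bar{\beta}_{n,n}$ in the integrand of (\ref{NEC}).

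For the first claim, I would bypass the explicit integral form and work directly with the defining expression
\begin{equation*}
C_n(\theta_n) = -\frac{1}{\theta_n T_s}\log \mathbb{E}\bigl\{ e^{-\theta_n r_n T_s}\bigr\},
\end{equation*}
observing that $r_n = B\log(1+\gamma_n)$ is a random variable that depends on $\mathbf{H}$ and $\mathcal{M}_n$ but not on $\theta_n$ itself. Setting $K(\theta):=\log\mathbb{E}\{e^{-\theta r_n T_s}\}$, H\"older's inequality gives convexity of $K$, and clearly $K(0)=0$. The subgradient inequality $K(0)\ge K(\theta)+K'(\theta)(0-\theta)$ then yields $K(\theta)-\theta K'(\theta)\le 0$, which is exactly the sign needed for $\bigl(-K(\theta)/\theta\bigr)'\le 0$, so $C_n$ is non-increasing in $\theta_n$. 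This is the textbook monotonicity of effective capacity and is independent of the multi-SBS structure.

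For the second claim, I would fix $\theta_j$ and all $P_k$ with $k\ne n$ and show $\partial C_j/\partial P_n\ge 0$; monotonicity of $C_j$ in $\theta_n$ then follows because (\ref{idle}) makes $P_n$ a strictly increasing linear function of $\theta_n$. Writing $C_j=-(\theta_j T_s)^{-1}\log G_j$ with $G_j = 1-\int_0^1 e^{-s}\prod_{k\ne j}g_k(s,P_k)\,dt$ and $g_k(s,P_k)=(1-P_k)/(1+s\bar{\beta}_{k,j})+P_k$, the key computation is
\begin{equation*}
\frac{\partial g_n}{\partial P_n}=\frac{s\bar{\beta}_{n,j}}{1+s\bar{\beta}_{n,j}}\ge 0\quad\text{on }t\in(0,1),
\end{equation*}
where $s\ge 0$, while the remaining factors $g_k$ lie in $[0,1]$. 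Differentiating under the integral (justified because the integrand is bounded and jointly smooth in $(t,P_n)$) then shows that the integral term is non-decreasing in $P_n$, so $G_j$ is non-increasing and $-\log G_j$ non-decreasing, giving the claim.

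The main delicate point I anticipate is bookkeeping around the coupling of the $N$ QoS exponents through (\ref{Q1}): the statement implicitly treats the other $\theta_j^*$ as fixed while $\theta_n$ varies, so one should verify that this partial-derivative view is consistent with the fixed-point system that defines the $\theta_j^*$. In the same spirit, the fact that $s$ in (\ref{NEC}) itself depends on $\theta_n$ is what motivates handling the first claim via the compact definition rather than by differentiating (\ref{NEC}) in $\theta_n$ directly, where the algebra would obscure the simple convexity argument.
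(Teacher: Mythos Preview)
Your proposal is correct and follows the same logical skeleton as the paper: the paper cites \cite{CuiEC} for the first claim (your CGF-convexity argument supplies the omitted proof) and, for the second, simply observes via (\ref{idle}) that $P_n$ increases with $\theta_n$, hence the interference from SBS~$n$ to the others decreases and $\bm C_{-n}$ increases---your partial-derivative computation on the integrand of (\ref{NEC}) makes this last step rigorous where the paper leaves it heuristic. Your caveat about the fixed-point coupling through (\ref{eq1}) is well taken; the paper does not address it either, and the theorem is tacitly meant in the partial-derivative sense, as its use in Algorithm~1 confirms.
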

\begin{proof}
$C_n$ can be proven to be a monotonically decreasing function of $\theta_n$ \cite{CuiEC}. According to (\ref{idle}), we can observe that the idle probability $P_n$ monotonically increases with $\theta_n$. Then, the interference of the $n$-th SBSs to the other SBSs decreases. As a result, $\bm C_{-n}$ increases with $\theta_n$.
\end{proof}

\begin{cor}
Given $\bar L_n, n \in \{ 1,2,\cdots ,N\}$, the EC of the SBSs satisfy the following nonlinear equations with unknown $\bm \theta$,
\begin{equation} \label{eq1}
\left\{ {\begin{array}{*{20}{l}}
~{{C_1}\left( \bm \theta  \right) = {A_1}({\theta _1});}\\
\begin{array}{l}
{C_2}\left( \bm \theta  \right) = {A_2}({\theta _2});\\
 \vdots \\
{C_N}\left( \bm \theta  \right) = {A_N}({\theta _N}).
\end{array}
\end{array}} \right.
\end{equation}
\end{cor}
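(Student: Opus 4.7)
The plan is to read the corollary as the simultaneous queue-stability conditions over all $N$ SBSs, so that the proof reduces to applying the single-SBS identity $A_n(\theta_n^{*}) = C_n(\theta_n^{*})$ from (\ref{Q1}) in parallel, and then substituting the coupling through the idle probabilities that was derived in (\ref{idle}) and (\ref{NEC}).

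First I would fix $\bar L_n$ for every $n$ and recall from (\ref{Q1}) that, whenever a unique QoS exponent $\theta_n^{*}$ exists for the $n$-th SBS under Gartner--Ellis, it is characterized by $A_n(\theta_n^{*}) = C_n(\theta_n^{*})$. Next I would remind the reader that in a UDN the service process at SBS $n$ is not a function of $\theta_n$ alone: through equation (\ref{NEC}), $C_n$ depends on the idle probabilities $\{P_j\}_{j\ne n}$ of the other SBSs, and by (\ref{idle}) each $P_j$ is itself determined by $\theta_j^{*}$ (and the fixed $\bar L_j$, $p_j$). Substituting (\ref{idle}) into (\ref{NEC}) therefore turns $C_n$ into a function of the full vector $\bm\theta = (\theta_1,\dots,\theta_N)$, so that the $N$ individual fixed-point equations $A_n(\theta_n) = C_n(\bm\theta)$ must hold simultaneously. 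Collecting them gives exactly the system (\ref{eq1}).

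The main step that deserves emphasis, rather than a deep obstacle, is justifying why Gartner--Ellis applies to each queue \emph{jointly}: each $C_n(\bm\theta)$ must still be essentially smooth and strictly convex in $\theta_n$ once the other components of $\bm\theta$ are held fixed at their equilibrium values. I would point to Theorem~1, which already establishes monotonicity of $C_n$ in $\theta_n$ (decreasing) and in each $\theta_j$, $j\ne n$ (increasing via $P_j$), so that the coupling preserves the properties required by the Gartner--Ellis hypotheses footnoted after (\ref{Q1}). With those assumptions in force, the unique-$\theta_n^{*}$ argument of the single-SBS case applies component-wise.

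Finally I would remark that the corollary only asserts that any stable operating point $\bm\theta^{*}$ \emph{satisfies} the system (\ref{eq1}); it does not claim existence or uniqueness of the solution vector. Existence/uniqueness of a fixed point of the coupled map $\bm\theta \mapsto (A_n^{-1}(C_n(\bm\theta)))_{n=1}^{N}$ is precisely what motivates the game-theoretic best-response iteration introduced later in the paper, and I would flag this connection rather than attempt to prove it here. The corollary itself then follows immediately by writing down (\ref{Q1}) for each $n$ and substituting the coupled expression for $C_n$ from (\ref{NEC}).
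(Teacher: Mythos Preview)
Your argument is correct and matches the paper's own treatment: the paper gives no separate proof of this corollary but treats it as immediate from writing (\ref{Q1}) for each $n$ and observing, via (\ref{idle}) substituted into (\ref{NEC}), that $C_n$ becomes a function of the full vector $\bm\theta$. One small correction to your closing remark: the numerical solution of (\ref{eq1}) in the paper is handled by the bisection scheme of Algorithm~1 (justified precisely by the monotonicity in Theorem~1 together with the monotonicity of $A_n$), not by the game-theoretic best-response iteration of Section~III-D, which addresses the separate problem of selecting the traffic-unsaturation parameters $p_n$.
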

Eq. (\ref{eq1}) can be solved by standard numerical methods, such as the Newton's method. Nevertheless, according to Theorem 1 and that effective bandwidth monotonically increases with $\theta_n$, we can develop a fast algorithm to bisectionally search for the solution, as summarized in Algorithm 1. The proposed algorithm is efficient since it requires only linear search, instead of calculating the Jacobian matrix of (\ref{eq1}) (as typically required in the Newton's method).

\begin{cor}
Given the QoS requirements $d_n ={\theta_n^*}{{\bar L}_n}, n \in \{ 1,2,\cdots ,N\}$ (According to (\ref{Pth}), $d_n$ can decide on the delay requirements), the maximum allowed average arrival rate is ${\mu _n} = \frac{{{d_n}{p_n}}}{{\theta _n^*{T_s}}}$, where the QoS exponent $\theta _n^*$ can be obtained by solving the following equations.
\begin{equation} \label{sita1}
{\theta _n}{T_s}{C_n}\left( {{\theta _n},d_1^*, \ldots d_{n - 1}^*,d_{n + 1}^*, \ldots ,d_N^*} \right) \!\!= \log (\frac{{{p_n}}}{{1 - {d_n}}} + 1 - {p_n}).
\end{equation}
\end{cor}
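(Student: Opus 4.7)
The plan is to derive the corollary directly from the steady-state condition (\ref{Q1}) by substituting the explicit form of the effective bandwidth (\ref{EB}) and then reading off the maximum arrival rate from the system-model relation $\mu_n = \bar L_n p_n/T_s$. The two quantities to be established are the defining equation (\ref{sita1}) for the QoS exponent and the formula $\mu_n = d_n p_n/(\theta_n^* T_s)$; both follow from the single change of variables $\bar L_n = d_n/\theta_n^*$.

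More concretely, I would first invoke (\ref{Q1}): at the (unique) QoS exponent $\theta_n^*$ that the Gartner--Ellis assumptions yield, the effective bandwidth of the Bernoulli arrival process equals the effective capacity, i.e.\ $A_n(\theta_n^*) = C_n(\theta_n^*, P_1,\ldots,P_{n-1},P_{n+1},\ldots,P_N)$. Using (\ref{idle}), each interfering probability $P_j = \theta_j^* \bar L_j (1-p_j) = d_j^*(1-p_j)$ is fully determined by $d_j^*$ (together with $p_j$), so it is legitimate to view the right-hand side as $C_n(\theta_n^*, d_1^*,\ldots,d_{n-1}^*,d_{n+1}^*,\ldots,d_N^*)$. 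For the left-hand side, (\ref{EB}) gives
\begin{equation*}
A_n(\theta_n^*) = \frac{1}{\theta_n^* T_s}\log\!\left(\frac{p_n}{1-\theta_n^*\bar L_n} + 1 - p_n\right) = \frac{1}{\theta_n^* T_s}\log\!\left(\frac{p_n}{1-d_n} + 1 - p_n\right),
\end{equation*}
since $d_n = \theta_n^*\bar L_n$. Multiplying both sides by $\theta_n^* T_s$ yields exactly (\ref{sita1}), which is the equation the corollary claims must be solved for $\theta_n^*$. The maximum-arrival-rate formula then follows instantly: substituting $\bar L_n = d_n/\theta_n^*$ into $\mu_n = \bar L_n p_n/T_s$ gives $\mu_n = d_n p_n/(\theta_n^* T_s)$.

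The one nontrivial point — and the step I expect to require the most care — is justifying the word \emph{maximum}. For fixed QoS target $d_n$ and fixed arrival probability $p_n$, increasing $\bar L_n$ only raises $A_n(\theta)$ pointwise in $\theta$, while $C_n(\theta,\cdot)$ is independent of $\bar L_n$; by Theorem~1 it is also monotonically decreasing in $\theta_n$. Thus the pair $(\theta_n^*,\bar L_n)$ determined by $A_n=C_n$ under the constraint $\theta_n^*\bar L_n = d_n$ is the boundary beyond which the queue becomes unstable and the statistical QoS cannot be met. Since $\mu_n$ is proportional to $\bar L_n$ and inversely to $\theta_n^*$, saturating the equality $A_n(\theta_n^*)=C_n(\theta_n^*)$ — i.e.\ solving (\ref{sita1}) — simultaneously maximises $\mu_n$ subject to the QoS constraint, which completes the argument.
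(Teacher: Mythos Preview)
Your proposal is correct and follows essentially the same route as the paper: obtain (\ref{sita1}) by substituting the effective-bandwidth expression (\ref{EB}) and the change of variables $d_n=\theta_n^*\bar L_n$ into the steady-state condition (\ref{Q1}), and read off $\mu_n$ from $\mu_n=\bar L_n p_n/T_s$. The only difference in emphasis is that the paper uses the monotonicity observation to argue that (\ref{sita1}) can be solved by bisection (noting $\theta_n T_s C_n = -\log\mathbb{E}\{e^{-\theta_n r_n T_s}\}$ is increasing in $\theta_n$), whereas you deploy monotonicity to justify the word ``maximum''; both are valid and complementary.
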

Note that (\ref{sita1}) is obtained by substituting (\ref{EB}) and (\ref{NEC}) into (\ref{Q1}), and can be solved using a bisection method. This is because, according to (\ref{q}), we have
\[\begin{array}{l}
{\theta _n}{T_s}{C_n}\left( {{\theta _n},d_1^*, \ldots d_{n - 1}^*,d_{n + 1}^*, \ldots ,d_N^*} \right)\\
 =  - \log ({E_{{M_n},{\bf{H}}}}\{ {e^{ - {\theta _n}{r_n}{T_s}}}\} ).
\end{array}\]
Obviously, ${\theta _n}{T_s}{C_n}\left( {{\theta _n},d_1^*, \ldots d_{n - 1}^*,d_{n + 1}^*, \ldots ,d_N^*} \right)$ monotonically increase with respect to $\theta_n$.

\begin{cor}
As ${\theta _n} \to 0, \forall n \in \mathcal{N}$, ${P_n} \to 0, \forall n \in \mathcal{N}$ according to (\ref{idle}). In this case, the traffic is saturated and the EC is expected to approach the classical capacity with saturated traffic.
\end{cor}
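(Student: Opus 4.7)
The plan is to split the claim into two parts and treat them separately. For the first part, equation (\ref{idle}) gives $P_n \approx \theta_n^* \bar L_n (1-p_n)$, and since $\bar L_n$ and $(1-p_n)$ are bounded constants independent of $\theta_n$, letting $\theta_n \to 0$ immediately yields $P_n \to 0$ for all $n$. Intuitively, a vanishing QoS exponent imposes no throttling on the source, so the queue is almost always non-empty and the SBS is almost always active. The only subtle point is the coupling: since all $\theta_n$ tend to zero jointly, one should verify via (\ref{Q1}) that no $\theta_j^*$ blows up in the process, but the linear dependence $P_n \propto \theta_n^*$ makes this automatic.

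For the second part, the goal is to show that $C_n(\theta_n,P_1,\ldots,P_N)$ converges to the classical ergodic Shannon capacity $\mathbb{E}_{\mathbf{H}}\{B\log(1+\gamma_n^{\mathrm{full}})\}$, where $\gamma_n^{\mathrm{full}}$ is the SINR with the full interference set $\mathcal{M}_n = \mathcal{N}\setminus\{n\}$ active. I would work from (\ref{q}) directly rather than (\ref{NEC}), since the indeterminate form is easier to handle there. First, because $P_j \to 0$ for each $j\neq n$, the random active set $\mathcal{M}_n$ concentrates on the full set $\mathcal{N}\setminus\{n\}$, so the outer expectation in (\ref{q}) collapses in the limit to evaluating $r_n$ under the full-interference SINR. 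Second, for the inner limit $\theta_n\to 0$, expand $e^{-\theta_n r_n T_s} = 1 - \theta_n T_s r_n + O(\theta_n^2 r_n^2)$, take expectation to obtain $\mathbb{E}\{e^{-\theta_n r_n T_s}\} = 1 - \theta_n T_s \mathbb{E}\{r_n\} + O(\theta_n^2)$, apply $\log(1+x)\approx x$, and divide by $-\theta_n T_s$ to recover $\mathbb{E}\{r_n\}$. The equivalent formal statement is L'H\^opital's rule applied to the indeterminate form $0/0$ at $\theta_n=0$.

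The main obstacle is to rigorously justify the interchange of the $\theta_n\to 0$ limit with the expectation and the simultaneous collapse of $\mathcal{M}_n$ to the full set. For the expectation interchange, I would appeal to dominated convergence using the bound $(1-e^{-\theta_n r_n T_s})/(\theta_n T_s) \le r_n$ together with the finiteness of $\mathbb{E}_{\mathbf{H}}\{r_n\}$; the latter holds because $\gamma_n^{\mathrm{full}} \le \beta_{n,n}$ has a light-tailed exponential law and $\log(1+\beta_{n,n})$ is integrable. For the collapse of $\mathcal{M}_n$, a coupling argument suffices: the probability that $\mathcal{M}_n \neq \mathcal{N}\setminus\{n\}$ is bounded by $\sum_{j\neq n} P_j$, which vanishes with $\bm\theta$, so the contribution of residual configurations to the expectation is asymptotically negligible. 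Combining these two steps gives the announced saturated-traffic limit, and so Corollary 3 is consistent with the classical ergodic-capacity formula under full interference.
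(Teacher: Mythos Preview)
Your proposal is correct, and in fact it goes considerably further than the paper does. In the paper, Corollary~3 is stated without any proof: the first claim ($P_n\to 0$) is justified only by the reference to (\ref{idle}) already embedded in the statement, and the second claim (EC approaching classical capacity) is simply asserted as an ``expected'' limiting behavior, relying on the well-known fact from the effective-capacity literature that $-\tfrac{1}{\theta T_s}\log\mathbb{E}\{e^{-\theta r T_s}\}\to\mathbb{E}\{r\}$ as $\theta\to 0$.

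Your two-part decomposition, the Taylor/L'H\^opital argument for the $\theta_n\to 0$ limit, the dominated-convergence bound $(1-e^{-\theta_n r_n T_s})/(\theta_n T_s)\le r_n$ with $\mathbb{E}\{r_n\}<\infty$, and the union-bound collapse of $\mathcal{M}_n$ to $\mathcal{N}\setminus\{n\}$ are all sound and constitute a genuine proof where the paper offers only an observation. The only minor comment is that your remark about checking that ``no $\theta_j^*$ blows up'' via (\ref{Q1}) is unnecessary here: the corollary sends all $\theta_n$ to zero \emph{jointly and exogenously}, so there is no fixed-point coupling to worry about in this limit.
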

\subsection{Maximization of Total EC Impacted by Traffic Unsaturation}
The effective capacity of a interference-limited network has to be further developed, due to sophisticated interactions among SBSs operating in the same frequency. More specifically, the effective capacity of a designated SBS decreases with the increasing traffic of other co-channel SBSs. This is because the increasing traffic of other SBSs can result in severe interference towards the designated SBS.

We proceed to study the impact of the level of traffic saturation, i.e., $p_n$, on the total effective capacity of UDN. A non-cooperative game can be formulated to characterize the sophisticated interactions in the UDN. The game is designed to maximize the amount of traffic while guaranteeing the QoS of the traffic, as given by ${\cal G} = \left[ {{\cal N},{{\left\{ {{{p}_n}} \right\}}_{n \in {\cal N}}},{{\left\{ {{U_n}} \right\}}_{n \in {\cal N}}}} \right]$, where ${\cal N} = \{ 1, \ldots ,N\} $ is a set of players (i.e., SBSs), ${{p}_n}$ is traffic unsaturation policy, and ${{U_n}}$ is a utility (payoff) function of player $n$. Let ${\textbf{p}_{ - n}} = \{ {p_1}, \ldots {p_{n - 1}},{p_{n + 1}}, \ldots ,{p_N}\}$. ${U_n}({p_n},{\textbf{p}_{ - n}})$ is given by
\[
\begin{aligned}
&{U_n}({p_n},{\textbf{p}_{ - n}}) = \frac{{{L_n}{p_n}}}{{{\theta _n}{T_s}}}\\
&s.t.{A_n}({p_n}) \le {C_n}(\textbf{p}_{ - n}),\\
&~~~~0 \le {p_n} \le 1\\
\end{aligned}\]
where the objective represents the maximum allowed average arrival rate, and the first constraint is set to guarantee the QoS.

\begin{cor} The best-response function of player $n$ can be given by
\[b({\textbf{p}_{-n}}) = \left[ {\frac{{\left( {{e^{{\theta _n}{T_s}{C_n}(\bf p_{ - n})}} - 1} \right)(1 - {\theta _n}{L_n})}}{{{\theta _n}{L_n}}}} \right]_0^1.\]
where $\left[ x \right]_a^b = \min [\max (x,a),b]$.
\end{cor}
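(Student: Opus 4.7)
The plan is to exploit the fact that the objective $U_n(p_n,\mathbf{p}_{-n}) = L_n p_n/(\theta_n T_s)$ is linear and strictly increasing in $p_n$ (since $L_n,\theta_n,T_s>0$), so the best response is simply the largest feasible value of $p_n$ in $[0,1]$. Since $C_n(\mathbf{p}_{-n})$ does not depend on $p_n$, the QoS constraint $A_n(p_n)\le C_n(\mathbf{p}_{-n})$ is the only coupling between player $n$'s choice and the opponents' profile, so I need to invert that constraint.

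First I would verify that $A_n(p_n)$, viewed as a function of $p_n$ with $\theta_n$ and $\bar L_n$ fixed, is strictly increasing on $[0,1]$. From (\ref{EB}),
\begin{equation*}
A_n(p_n) = \frac{1}{\theta_n T_s}\log\!\left(\frac{p_n}{1-\theta_n\bar L_n}+1-p_n\right),
\end{equation*}
and the argument of the logarithm is $1+p_n\bigl(\tfrac{1}{1-\theta_n\bar L_n}-1\bigr) = 1+p_n\tfrac{\theta_n\bar L_n}{1-\theta_n\bar L_n}$, which is strictly increasing in $p_n$ (assuming $\theta_n\bar L_n<1$, required for the effective bandwidth to be well defined). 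Consequently, $U_n$ is maximized either at the boundary $p_n=1$ or at the unique $p_n^\star$ that makes the QoS constraint tight.

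Next I would solve $A_n(p_n^\star)=C_n(\mathbf{p}_{-n})$ in closed form. Exponentiating gives
\begin{equation*}
\frac{p_n^\star}{1-\theta_n\bar L_n}+1-p_n^\star = e^{\theta_n T_s C_n(\mathbf{p}_{-n})},
\end{equation*}
so rearranging isolates the bracketed form
\begin{equation*}
p_n^\star\cdot\frac{\theta_n\bar L_n}{1-\theta_n\bar L_n} = e^{\theta_n T_s C_n(\mathbf{p}_{-n})}-1,
\end{equation*}
and therefore
\begin{equation*}
p_n^\star = \frac{\bigl(e^{\theta_n T_s C_n(\mathbf{p}_{-n})}-1\bigr)(1-\theta_n\bar L_n)}{\theta_n\bar L_n}.
\end{equation*}

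Finally, because the domain of the strategy is $p_n\in[0,1]$, I would project the unconstrained root onto this interval. If $p_n^\star<0$, the constraint is violated for all admissible $p_n$ and the player should pick $p_n=0$; if $p_n^\star>1$, the QoS constraint is slack at $p_n=1$ and the upper bound binds instead. Combining both cases yields the stated best response $b(\mathbf{p}_{-n}) = [p_n^\star]_0^1$. The only mild obstacle is justifying that one can invert the constraint rather than solving a non-trivial optimization, and this reduces to the monotonicity check above; the rest is routine algebra.
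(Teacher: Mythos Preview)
Your argument is correct and follows essentially the same route as the paper: both invert the QoS constraint $A_n(p_n)\le C_n(\mathbf{p}_{-n})$ using the explicit form of the effective bandwidth, observe that the utility is increasing in $p_n$, and then clip to $[0,1]$. Your version is slightly more careful in that you explicitly verify the monotonicity of $A_n$ in $p_n$ (which the paper tacitly uses when it converts the inequality into an upper bound on $p_n$), but the structure and algebra are the same.
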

\begin{proof}
According to (5), (12) and (27), ${A_n}({p_n}) \le {C_n}(\textbf{p}_{ - n})$ can be rewritten as
\[{p_n} \le \frac{{\left( {{e^{{\theta _n}{T_s}{C_n}(\textbf{p}_{ - n})}} - 1} \right)(1 - {\theta _n}{L_n})}}{{{\theta _n}{L_n}}}\]
Because the best-response function of player $n$ satisfies
$b(\textbf{p}_{-n})= \left\{ {\left. {{p_n}} \right|{U_n}({p_n},{\textbf{p}_{ - n}}) \ge {U_n}({{p'}_n},{\textbf{p}_{ - n}})} \right\}$, we have
\[b({\bf{p}_{ - n}}) = \max \left( {\frac{{\left( {{e^{{\theta _n}{T_s}{C_n}({\bf{p}_{ - n}})}} - 1} \right)(1 - {\theta _n}{L_n})}}{{{\theta _n}{L_n}}},1} \right).\]
The proof is accomplished.
\end{proof}
A pure-strategy NE of the non-cooperative game ${\cal G}$ is a strategy profile ${{\left\{ {{{p}_n^*}} \right\}}_{n \in {\cal N}}}$ such that $\forall n \in \cal N$ we have the following:
\[{U_n}(p_n^*,\textbf{p}_{ - n}^*) \ge {U_n}({p_n},\textbf{p}_{ - n}^*).\]
Using Corollary 4, we can obtain
\[p_n^* = b(\textbf{p}_{ - n}^*),\forall n \in \mathcal{N}.\]
According to \cite{R1}, we can iteratively search for the NE of the traffic unsaturation policy.

\begin{algorithm}[t]
  \caption{A heuristic algorithm to solve (\ref{eq1}).}
  \label{Online Algorithm}
  \begin{algorithmic}[1]
  \STATE Given the lower bound $LB = [0,0,\ldots,0]$, the upper bound $UB =[\frac{1}{{{{\bar L}_1}}},\frac{1}{{{{\bar L}_2}}},\ldots,\frac{1}{{{{\bar L}_N}}}]$, and $\bm \theta = \frac{{LB + UB}}{2}$.
  \REPEAT
  \FOR{$n \in \mathcal{N}$}
   \IF{$C_n(\bm \theta) > A_n(\theta_n)$}
  \STATE  $LB[n] = {\theta _n}$, and ${\theta _n} = \frac{{LB[n] + UB[n]}}{2}$,
  \ELSE
  \STATE  $UB[n] = {\theta _n}$, and ${\theta _n} = \frac{{LB[n] + UB[n]}}{2}$.
  \ENDIF
  \ENDFOR
  \UNTIL{$\left| {{C_n}(\theta ) - {A_n}(\theta )} \right| \ll \xi ,\forall n \in \mathcal{N}$, where $0 < \xi  \ll 1$.}
  \end{algorithmic}
\end{algorithm}

\section{Simulation Result}
In this section, simulations are carried out to verify the accuracy of our proposed model and evaluate the performance of UDNs, where $N$ SBSs with overlapping coverage are uniformly distributed in an area of ${\rm{500}} \times {\rm{500}}~{{\rm{m}}^2}$, and 1000 UEs are randomly and uniformly distributed within the area. In the simulations there is one channel and each SBS serves a single UE in a channel of every instant. The locations of the SBSs follow a spatial Poisson process. The coverage of each of the SBSs can then be specified to be the area, in which the received signals of the SBS are stronger than those from the rest of the SBSs. The user served by the SBS in a channel is set to be uniformly randomly distributed within the area. And the traffic arrival probability of the UE is $p_n= 0.2$, the mean of traffic arrival per slot is $\bar L_n=100$ bits, the queue length threshold is $Q_n^{\rm th} = 10$, the channel bandwidth is  $B=1$ MHz, and the duration of a slot is $T_s = 1$ ms. The transmit power of each SBS is 23 dBm. The noise spectral density is $-174$ dBm/Hz. The Large-scale path loss is $PL_n = 60+37.6\mathrm{lg}(D_n)$, where $D_n$ is the distance from the $n$-th SBS to its associated UE. Each point of figures is the average of 2000 independent runs and for every independent run, 100000 time slots are tested in the simulation.

In Fig. \ref{fig2}, the accuracy of the proposed model for 2--SBS case is verified by comparing with simulation results and the full interference model, where there are two UEs and two corresponding SBSs, ${\bm{\bar \beta }}{\rm{ = [10,1;2,20]}}$ dB. The full interference model means that the interference for $n$-th UE comes from all SBSs except the $n$-th SBS.
The analytical results of the queue-violation probability are obtained by (\ref{Q}), (\ref{EC1}), (\ref{EC2}) and (\ref{eq1}). As traffic increases, the  the queue-violation probability also increases. We can see that the analytical results coincide the simulation results for both UEs 1 and 2, in other words, our analysis is accurate. There exist big gaps between the simulation results and the full interference model, indicating that random interference resulting from stochastic traffic arrival per SBS can have a strong impact on the EC and the QoS (i.e., queue length or delay).

\begin{figure}[!t]
\centering
\includegraphics[width=3.5in]{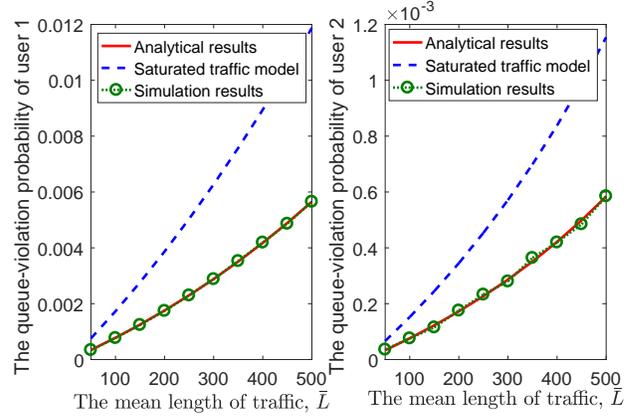}
\vspace{-0.4cm}
\caption{The QoS-violation probability of UE 1 and 2 in the two-SBS case.}
\label{fig2}
\vspace{-0.2cm}
\end{figure}

In Fig. \ref{fig3}, the accuracy of the proposed model for N--SBS case is verified by comparing with simulation results. The analytical results of the QoS-violation probability are obtained by (\ref{Q}), (\ref{NEC}) and (\ref{eq1}). We can see that the analytical results coincide the simulation results. There also exist big gaps between the simulation results and the full interference model. With the increasing density of SBSs, the queue-violation probability decreases. This is because SBSs are increasingly close to UEs, which can help improve signal qualities of the UEs. As the density of SBSs continues to increase, the queue violation probability stabilizes. This is because the interference of other SBSs also increase, counteracting the effect of the improving signal qualities.

\begin{figure}[!t]
\centering
\includegraphics[width=3.5in]{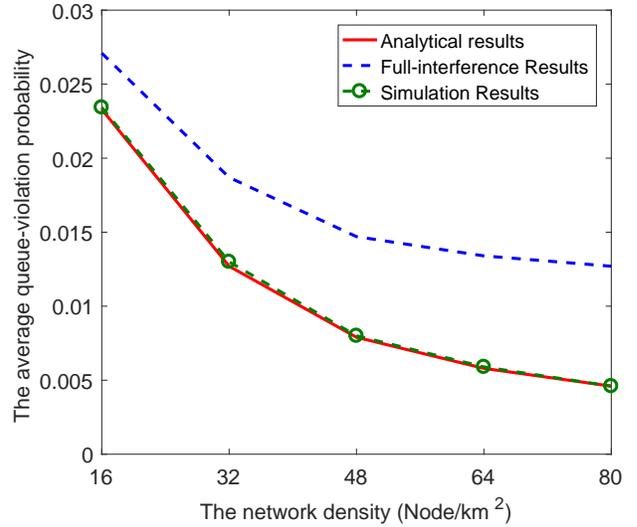}
\vspace{-0.4cm}
\caption{The average queue-violation probability of N SBSs}
\label{fig3}
\vspace{-0.2cm}
\end{figure}

\begin{figure}[!t]
\centering
\includegraphics[width=3.5in]{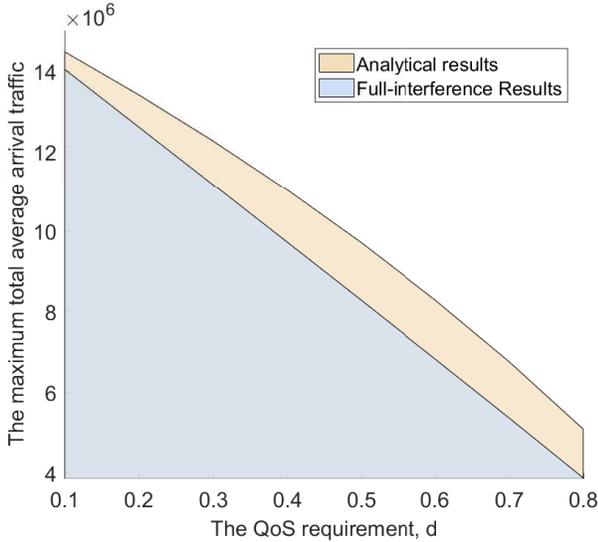}
\vspace{-0.4cm}
\caption{The maximum total average traffic arrival rate vs the QoS requirement.}
\label{fig4}
\vspace{-0.2cm}
\end{figure}

The important region of EC in UDNs is plotted in Fig. \ref{fig4}, and so is the tradeoff between the maximum total average traffic arrival rate and the QoS requirement.
Particularly, Fig. \ref{fig4} plots the maximum total average traffic arrival rate with the growing QoS requirement, where $d = {d_1} = {d_2} \cdots {\rm{ = }}{d_N}$, and $N=8$. Given the network density, the QoS and traffic amount that the SBSs can achieve in  UDNs are captured in the corresponding shaded area enclosed by a curve plotted by Corollary 2. It is shown that the total average arrival rate of our model overtops that of the full interference model, due to the consideration of random interference. As $d \to 0$, the analytical results approach to the full interference model, which is consistent with Corollary 3.

In Fig. \ref{fig5}, we evaluate the maximum traffic arrival rate of each SBS in UDNs, under different network densities and QoS requirements. We can see that the analytical results overtop the full interference model. And the gap between our model and the full interference model increases with the increasing density. This is because the dense deployment and the corresponding stochastic traffic arrival per SBS can have a stronger impact on the SINR distribution and the EC. With an increasing density of SBSs, the maximum total arrival rate of each SBS increases and gradually stabilizes. This conclusion matches that of Fig. \ref{fig3}. As the QoS requirement $d$ increases, the maximum allowed arrival rate of each SBS decreases.

\begin{figure}[!t]
\centering
\includegraphics[width=3.5in]{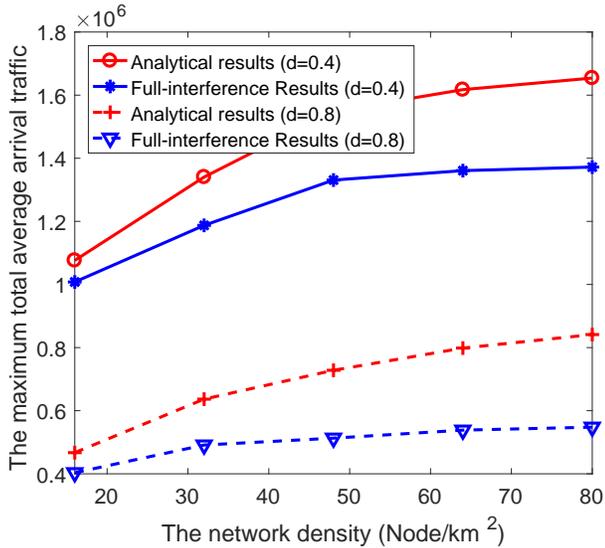}
\vspace{-0.4cm}
\caption{The maximum average traffic arrival rate of each SBS in the N-SBS case, where all UEs have the same QoS requirement, $d = {d_n},\forall n$.}
\label{fig5}
\vspace{-0.2cm}
\end{figure}

In Fig. 6, the impact of the different traffic pattern on the network performance is evaluated, where $B=0.1$ MHz and ${\theta _n} = {10^{ - 3}}$. It is shown that using Corollary 4 the game can converge to the NE point at finite iterations and the NE point is superior to other strategies. We also observe that the total traffic amount has a linear relationship with the network density. This is because the capacity of each SBS has stabilized, as mentioned earlier in Fig. 5.

\begin{figure}[!t]
\centering
\includegraphics[width=3.5in]{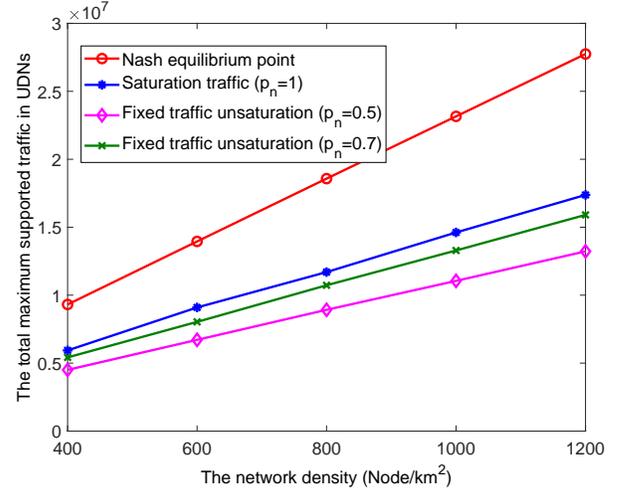}
\vspace{-0.4cm}
\caption{The different traffic pattern.}
\label{fig10}
\vspace{0.2cm}
\end{figure}

\section{Conclusion}
In this paper, we develop a cross-layer analytical model and derive the EC of UDNs under time-varing random inter-cell interference due to unsaturated traffic and statistically characterized QoS requirements. Our analysis reveals intrinsic interactions among network capacity, traffic and QoS. Validated by simulations, the proposed model is accurate. Using the proposed model, we obtain the maximum traffic arrival rate with QoS guarantee in UDNs. In our future work, we will maximize the total EC of $N$ SBSs in UDNs through optimization of bandwidth and transmit power.

\section*{Acknowledgment}
The work was supported in part by the National Nature Science Foundation of China Project under Grant 61471058, in part by the Hong Kong, Macao and Taiwan Science and Technology Cooperation Projects under Grant 2016YFE0122900, in part by the Beijing Science and Technology Commission Foundation under Grant 201702005, in part by the 111 Project of China under Grant B16006, in part by BUPT Excellent Ph.D. Students Foundation (CX2017305) and the China Scholarship Council (CSC).

\appendices
\section{Proof of the joint PDF of the SINR in the two-SBSs case}
The joint SINR is given by ${\gamma _1} = \frac{{{\beta _{1,1}}}}{{{\beta _{2,1}} + 1}}$. Let $v = {\beta _{2,1}} + 1$. The PDF of $v$ is given by
\[{f_v}({x_2}) = \frac{1}{{{{\bar \beta }_{2,1}}}}{e^{ - \frac{{{x_2} - 1}}{{{{\bar \beta }_{2,1}}}}}}, x_2 \ge 1\]
As $\beta_{1,1}$ and $\beta_{2,1}$ are independent, the joint PDF of $(\beta _{1,1}, v)$ can be given by
\begin{equation}
{f_{{\beta _{1,1}},v}}({x_1},{x_2}) = \frac{1}{{{{\bar \beta }_{1,1}}{{\bar \beta }_{2,1}}}}{e^{ - \frac{{{x_1}}}{{{{\bar \beta }_{1,1}}}} - \frac{{{x_2} - 1}}{{{{\bar \beta }_{2,1}}}}}}.
\end{equation}
Then using the bivariate transformation theory \cite{Wang,Song}, we denote $x=\frac{{{x_1}}}{{{x_2}}}$ and $y=x_2$. The Jacobian of the transformation is
\[J = \left| {\begin{array}{*{20}{c}}
{\frac{{\partial x}}{{\partial {x_1}}}}&{\frac{{\partial x}}{{\partial {x_2}}}}\\
{\frac{{\partial y}}{{\partial {x_1}}}}&{\frac{{\partial y}}{{\partial {x_2}}}}
\end{array}} \right| = \left| {\begin{array}{*{20}{c}}
{\frac{1}{{{x_2}}}}&{ - \frac{{{x_1}}}{{{{\left( {{x_2}} \right)}^2}}}}\\
0&1
\end{array}} \right| = \frac{1}{{{x_2}}}.\]
As a result, the joint PDF of $(\gamma _1, v)$ is given by
\[\begin{aligned}
{f_{{\gamma _1},v}}(x,{x_2}) &= \frac{{{f_{{\beta _{1,1}},v}}(x{x_2},{x_2})}}{J}\\
& = \frac{{{x_2}}}{{{{\bar \beta }_{1,1}}{{\bar \beta }_{2,1}}}}{e^{ - \frac{{x{x_2}}}{{{{\bar \beta }_{1,1}}}} - \frac{{{x_2} - 1}}{{{{\bar \beta }_{2,1}}}}}}.
\end{aligned}\]
Then the PDF of $\gamma _1$, $f_2(x)$, is the marginal distribution of ${f_{{\gamma _1},v}}(x,{x_2})$, as given by
\begin{equation} \label{IN2}
\begin{aligned}
{f_2}(x) &= \int_1^{ + \infty } {{f_{{\gamma _1},v}}(x,{x_2})d{x_2}} \\
& = \frac{{{e^{\frac{1}{{{{\bar \beta }_{2,1}}}}}}}}{{{{\bar \beta }_{1,1}}{{\bar \beta }_{2,1}}}}\int_1^{ + \infty } {{x_2}{e^{ - \left( {\frac{x}{{{{\bar \beta }_{1,1}}}}{\rm{ + }}\frac{{\rm{1}}}{{{{\bar \beta }_{2,1}}}}} \right){x_2}}}d{x_2}}.
\end{aligned}
\end{equation}
According to \cite{gradshteyn2014table}, we have
\begin{equation} \label{IN}
\int_u^{ + \infty } {{x^n}{e^{ - \mu x}}dx = {e^{ - u\mu }}\sum\limits_{k = 0}^n {\frac{{n!{u^k}}}{{k!{\mu ^{n - k + 1}}}}} }
\end{equation}
Substituting $n=1$, $u=1$ and $\mu {\rm{ = }}\frac{x}{{{{\bar \beta }_{1,1}}}}{\rm{ + }}\frac{{\rm{1}}}{{{{\bar \beta }_{2,1}}}}$ into (\ref{IN}), we can obtain
\begin{equation} \label{IN1}
\begin{array}{l}
\int_1^{ + \infty } {{x_2}{e^{ - \left( {\frac{x}{{{{\bar \beta }_{1,1}}}}{\rm{ + }}\frac{{\rm{1}}}{{{{\bar \beta }_{2,1}}}}} \right){x_2}}}d{x_2}} \\
 = {e^{ - \left( {\frac{x}{{{{\bar \beta }_{1,1}}}}{\rm{ + }}\frac{{\rm{1}}}{{{{\bar \beta }_{2,1}}}}} \right)}}\sum\limits_{k = 0}^{\rm{1}} {\frac{1}{{k!{{(\frac{x}{{{{\bar \beta }_{1,1}}}}{\rm{ + }}\frac{{\rm{1}}}{{{{\bar \beta }_{2,1}}}})}^{2 - k}}}}} \\
 = {e^{ - \left( {\frac{x}{{{{\bar \beta }_{1,1}}}}{\rm{ + }}\frac{{\rm{1}}}{{{{\bar \beta }_{2,1}}}}} \right)}}\left( {\frac{{{{\left( {{{\bar \beta }_{1,1}}{{\bar \beta }_{2,1}}} \right)}^{\rm{2}}}}}{{{{({{\bar \beta }_{1,1}} + x{{\bar \beta }_{2,1}})}^2}}}{\rm{ + }}\frac{{{{\bar \beta }_{1,1}}{{\bar \beta }_{2,1}}}}{{{{\bar \beta }_{1,1}} + x{{\bar \beta }_{2,1}}}}} \right)
\end{array}
\end{equation}

Substituting (\ref{IN1}) into (\ref{IN2}), $f_2(x)$ can be derived as
\[{f_2(x) = \left( {\frac{1}{{{{\bar \beta }_{1,1}} + {{\bar \beta }_{2,1}}x}} + \frac{{{{\bar \beta }_{1,1}}{{\bar \beta }_{2,1}}}}{{{{\left( {{{\bar \beta }_{1,1}} + {{\bar \beta }_{2,1}}x} \right)}^{\rm{2}}}}}} \right){e^{ - \frac{x}{{{{\bar \beta }_{1,1}}}}}}}.\]
This concludes the proof.

\begin{thebibliography}{10}
\providecommand{\url}[1]{#1}
\csname url@samestyle\endcsname
\providecommand{\newblock}{\relax}
\providecommand{\bibinfo}[2]{#2}
\providecommand{\BIBentrySTDinterwordspacing}{\spaceskip=0pt\relax}
\providecommand{\BIBentryALTinterwordstretchfactor}{4}
\providecommand{\BIBentryALTinterwordspacing}{\spaceskip=\fontdimen2\font plus
\BIBentryALTinterwordstretchfactor\fontdimen3\font minus
  \fontdimen4\font\relax}
\providecommand{\BIBforeignlanguage}[2]{{%
\expandafter\ifx\csname l@#1\endcsname\relax
\typeout{** WARNING: IEEEtran.bst: No hyphenation pattern has been}%
\typeout{** loaded for the language `#1'. Using the pattern for}%
\typeout{** the default language instead.}%
\else
\language=\csname l@#1\endcsname
\fi
#2}}
\providecommand{\BIBdecl}{\relax}
\BIBdecl

\bibitem{UDNsurvey}
M.~Kamel, W.~Hamouda, and A.~Youssef, ``Ultra-dense networks: A survey,''
  \emph{IEEE Communications Surveys Tutorials}, vol.~18, no.~4, pp. 2522--2545,
  Fourthquarter 2016.

\bibitem{cui2017preserving}
Q.~Cui, Y.~Gu, W.~Ni, X.~Zhang, X.~Tao, P.~Zhang, and R.~P. Liu, ``Preserving
  reliability to heterogeneous ultra-dense distributed networks in unlicensed
  spectrum,''\emph{IEEE Commun. Mag.}, to appear, 2018, \emph{http://arxiv.org/abs/1711.04614}.


\bibitem{Interference}
C.~Yang, J.~Li, Q.~Ni, A.~Anpalagan, and M.~Guizani, ``Interference-aware
  energy efficiency maximization in {5G} ultra-dense networks,'' \emph{IEEE
  Trans. Commun.}, vol.~65, no.~2, pp. 728--739, Feb 2017.

\bibitem{PPP1}
A.~K. Gupta, X.~Zhang, and J.~G. Andrews, ``{SINR} and throughput scaling in
  ultra-dense urban cellular networks,'' \emph{IEEE Wireless Commun. Lett.},
  vol.~4, no.~6, pp. 605--608, Dec 2015.

\bibitem{PPP2}
C.~Li, J.~Zhang, and K.~B. Letaief, ``Throughput and energy efficiency analysis
  of small cell networks with multi-antenna base stations,'' \emph{IEEE Trans.
  Wireless Commun.}, vol.~13, no.~5, pp. 2505--2517, May 2014.

\bibitem{INACTIVE1}
S.~M. Yu and S.~L. Kim, ``Downlink capacity and base station density in
  cellular networks,'' in \emph{2013 11th International Symposium and Workshops
  on Modeling and Optimization in Mobile, Ad Hoc and Wireless Networks
  (WiOpt)}, May 2013, pp. 119--124.

\bibitem{INACTIVE2}
M.~G. Kibria, G.~P. Villardi, K.~Nguyen, K.~Ishizu, and F.~Kojima,
  ``Heterogeneous networks in shared spectrum access communications,''
  \emph{IEEE J. Sel. Areas Commun.}, vol.~35, no.~1, pp.
  145--158, Jan 2017.

\bibitem{GLoFlow}
G.~Arvanitakis, T.~Spyropoulos, and F.~Kaltenberger, ``An analytical model for
  flow-level performance of large, randomly placed small cell networks,'' in
  \emph{2016 IEEE Global Communications Conference (GLOBECOM)}, Dec 2016, pp.
  1--7.

\bibitem{Queue2}
R.~P. Liu, G.~J. Sutton, and I.~B. Collings, ``A new queueing model for qos
  analysis of ieee 802.11 dcf with finite buffer and load,'' \emph{IEEE Trans. Wireless Commun.}, vol.~9, no.~8, pp. 2664--2675,
  August 2010.

\bibitem{Queue1}
M.~Sheng, W.~Jiao, X.~Wang, and G.~Liu, ``Effect of power control on
  performance of users in an interference-limited network with unsaturated
  traffic,'' \emph{IEEE Trans. Veh. Technol.}, vol.~66, no.~3,
  pp. 2740--2755, March 2017.

\bibitem{7061966}
L.~Musavian and Q.~Ni, ``Effective capacity maximization with statistical delay
  and effective energy efficiency requirements,'' \emph{IIEEE Trans. Wireless Commun.}, vol.~14, no.~7, pp. 3824--3835, July 2015.

\bibitem{EC}
D.~Wu and R.~Negi, ``Effective capacity: a wireless link model for support of
  quality of service,'' \emph{IEEE Trans. Wireless Commun.}, vol.~2, no.~4, pp.
  630--643, July 2003.

\bibitem{ChenEC}
J.~Xu, Y.~Chen, Q.~Cui, and X.~Tao, ``Use of two-mode transceiver circuitry and
  its cross-layer energy efficiency analysis,'' \emph{IEEE Commun. Lett.},
  vol.~PP, no.~99, pp. 1--1, 2017.

\bibitem{CuiEC}
Q.~Cui, Y.~Gu, W.~Ni, and R.~P. Liu, ``Effective capacity of licensed-assisted
  access in unlicensed spectrum for {5G}: From theory to application,''
  \emph{IEEE J. Sel. Areas Commun.}, vol.~35, no.~8, pp. 1754--1767, Aug 2017.

\bibitem{Wang}
H.~Wang, X.~Tao, and P.~Zhang, ``Adaptive modulation for dynamic point
  selection/dynamic point blanking,'' \emph{IEEE Commun. Lett.}, vol.~19,
  no.~3, pp. 343--346, March 2015.

\bibitem{Song}
Q.~Cui, H.~Song, H.~Wang, M.~Valkama and A.~A.~Dowhuszko, ``Capacity Analysis of Joint Transmission CoMP With Adaptive Modulation," \emph{IEEE Trans. Veh. Technol.}, vol. 66, no. 2, pp. 1876-1881, Feb. 2017.

\bibitem{gradshteyn2014table}
I.~S. Gradshteyn and I.~M. Ryzhik, \emph{Table of integrals, series, and
  products}.\hskip 1em plus 0.5em minus 0.4em\relax Academic press, 2014.

\bibitem{Fubinitheorem}
S.~Saks, ``Theory of the integral,'' 1937.

\bibitem{R1}
A.~Al-Zahrani and F.~Yu, ``An energy-efficient resource allocation and interference management scheme in green heterogeneous networks using game theory," \emph{IEEE Trans. Veh. Technol.}, no.99, Aug. 2015.

\end{thebibliography}

\end{document}